\newsavebox{\fminipagebox}
\NewDocumentEnvironment{fminipage}{m O{\fboxsep}}
 {\par\kern#2\noindent 
  \vspace{-3ex}
 \begin{center}\begin{lrbox}{\fminipagebox}
  \begin{minipage}{#1}\ignorespaces}
 {\end{minipage}\end{lrbox}%
  \makebox[#1]{%
    \kern\dimexpr-\fboxsep-\fboxrule\relax
    \fbox{\usebox{\fminipagebox}}%
    \kern\dimexpr-\fboxsep-\fboxrule\relax
  }\par\kern#2 \end{center}
  \vspace{-1ex}
 }
\newcounter{ProblemCounter}
\renewcommand\theProblemCounter{\arabic{ProblemCounter}}
\DeclareMathOperator{\range}{range}
\newtheorem{proposition}{Proposition}
\newtheorem{corollary}{Corollary}
\newcommand \bzero{\mathbf{0}}
\newcommand \bone{\mathbf{1}}
\newcommand \ba{\mathbf{a}}
\newcommand \bef{\mathbf{f}} 
\newcommand \bp{\mathbf{p}}
\newcommand \bq{\mathbf{q}}
\newcommand \bv{\mathbf{v}}
\newcommand \bA{\mathbf{A}}
\newcommand \bP{\mathbf{P}}
\newcommand \bQ{\mathbf{Q}}
\newcommand \btheta{\boldsymbol{\theta}}
\newcommand \bomega{\boldsymbol{\omega}}
\newcommand \mcE{\mathcal{E}}
\newcommand \mcG{\mathcal{G}}
\newcommand \mcN{\mathcal{N}}
\newcommand \mcS{\mathcal{S}}
\newcommand \mcT{\mathcal{T}}
\newcommand \bmcE{\bar{\mathcal{E}}}
\newcommand \tbA{\tilde{\mathbf{A}}}
\newcommand \cbA{\check{\mathbf{A}}}
\begin{document}

\title{Joint Grid Topology Reconfiguration \\
and Design of Watt-VAR Curves for DERs}

\author{	Manish K. Singh,~\IEEEmembership{Student Member,~IEEE},
			Sina Taheri,~\IEEEmembership{Student Member,~IEEE}, 
			Vassilis Kekatos,~\IEEEmembership{Senior Member,~IEEE},
			Kevin P. Schneider,~\IEEEmembership{Senior Member,~IEEE}, and
			Chen-Ching Liu,~\IEEEmembership{Fellow,~IEEE} 
		
}	
	

\maketitle
\begin{abstract}
Operators can now remotely control switches and update the control settings for voltage regulators and distributed energy resources (DERs), thus unleashing the network reconfiguration opportunities to improve efficiency. Aligned to this direction, this work puts forth a comprehensive toolbox of mixed-integer linear programming (MILP) models leveraging the control capabilities of smart grid assets. It develops detailed practical models to capture the operation of locally and remotely controlled regulators, and customize the watt-var DER control curves complying with the IEEE 1547.8 mandates. Maintaining radiality is a key requirement germane to various feeder optimization tasks. This requirement is accomplished here through an intuitive and provably correct formulation. To the best of our knowledge, this is the first time to optimally select a feeder topology and simultaneously design DER settings while taking into account legacy grid apparatus. The developed toolbox is put into action to reconfigure a grid for minimizing losses using real-world data on a benchmark feeder. The results corroborate that optimal topologies vary across the day and coordinating DERs and regulators is critical during periods of steep net load changes.
\end{abstract}

\begin{IEEEkeywords}
Watt-var control; radiality (tree) constraints; voltage regulators; IEEE 1547.8; linearized distribution flow.
\end{IEEEkeywords}

\section{Introduction}
\allowdisplaybreaks
Power distribution grids, in general, operate as radial networks connecting the substations to various customers. Oftentimes, these systems host normally-open switches that allow changes in the network topology and maintain radiality for protection system simplicity. The ability to switch between different topologies brings about a class of grid optimization tasks termed as \emph{distribution network reconfiguration}~(DNR). Some goals of DNR are post-outage restoration, load balancing, voltage regulation, power loss minimization, and planned maintenance~\cite{jianhui2017proc}, \cite{BW3}, \cite{JDLS19}, \cite{KA14}.

Utilizing existing switches to enhance efficiency and reliability of distribution systems is promising, making DNR a long pursued task~\cite{BW3}, \cite{LLV88}, \cite{LLV89}. Network reconfiguration problems are combinatorial in nature, and inevitably introduce integer variables when posed as mathematical programs. However, advancements in mixed-integer solvers for linear, quadratic, and second-order cone programs revived attempts towards efficient DNR reformulations~\cite{TH12}, \cite{JSP12}. Meanwhile, the development of exact conic relaxations of optimal power flow and accurate three-phase linear models have enabled computationally scalable DNR approaches that can cater to unbalanced multiphase grids~\cite{JDLS19}, \cite{LiuLi2019SDP}.

The advent of distributed energy resources (DERs) such as small generators, microgrids, and flexible loads has directed recent DNR research at maximally utilizing the available infrastructure~\cite{ccliu2018coordinating}, \cite{jianhui2018sequential}, \cite{SKL19GM}. On the other hand, the intermittency introduced by DERs increases the importance of DNR towards maintaining voltages within safe limits~\cite{Ghamsari2016}. Thus, attempts are being directed towards leveraging smart grid assets such as dispatchable DERs, capacitor banks, and remotely controlled voltage regulators in DNR formulations~\cite{LiuLi2019SDP},\cite{JDLS19}. 

Nonetheless, several smart grid devices (such as inverter-based photovoltaics (PVs) or energy storage units) and legacy grid devices alike operate based on local control rules~\cite{IEEE1547}, \cite{hiskens2016tap}. On an operational basis, these rules could be fixed (regulators and capacitor banks) or reconfigured periodically~\cite{Jalali19}. This is to reduce the frequency in communication and optimal power flow computations~\cite{KeWaCoGia15}. Yet the outcome of DNR could be significantly affected by inaccurate or inadequate modeling of such locally controlled devices. The attempts at proper modeling of these devices are limited and based on simplifying assumption,s such as fixed and known taps for regulators and unity power factor DERs~\cite{jianhui2018sequential}.

Enforcing radiality is another critical aspect in grid topology reconfiguration and other optimization tasks, such as planning and topology identification~\cite{COMH15}, \cite{TKC19}. Popular approaches to enforce radiality include an exhaustive loop elimination, imposing a single inflow edge or a single parent per bus; see \cite{LCSH19}, \cite{Ahmadi15}, and references therein. Despite being a classical problem, the conventional approaches for enforcing radiality fail or lack optimality guarantees in the presence of DERs~\cite{LCSH19}.

The contribution of this work is threefold: \emph{i)} Put forth a novel mixed-integer linear program (MILP) model for designing watt-var curves for DERs that takes into account all IEEE 1547.8 standard mandates (Section~\ref{sec:pv}); \emph{ii)} Revisit an optimization model for guaranteeing connectedness and radiality of a feeder to provide a more compact form and establish its correctness (Section~\ref{sec:radial}). The model is intuitive, provably correct, and decouples radiality constraints from variables capturing actual flows; and \emph{iii)} To capture the effect of legacy devices, we develop an optimization model for capturing the operation of locally controlled regulators (Section~\ref{sec:vr}). This is in contrast to existing schemes where regulators are either ignored or their taps are presumed known. The proposed DNR is formulated as a mixed-integer quadratic program (MIQP) and tested using real-world load and solar generation data on the IEEE 37-bus benchmark. The tests of Section~\ref{sec:tests} corroborate that depending on the load-generation mix experienced across a day, the operator has to select different topologies as well as regulator and DER settings. Albeit our results build on a linearized and balanced grid model, they constitute a solid foundation for extensions to AC models and unbalanced multiphase setups. 

Regarding \emph{notation}, lower- (upper-) case boldface letters denote column vectors (matrices). Calligraphic symbols are reserved for sets. Symbol $^{\top}$ stands for transposition, and vectors $\mathbf{0}$ and $\mathbf{1}$ are the all-zero and all-one vectors. 

\section{Problem Statement and Existing Models}\label{sec:PF}
Suppose a utility knows the model of a feeder as well as the anticipated load and solar generation on a per-bus basis for the upcoming operating period of 4 hours or so. The utility operator would like to reconfigure the grid via remotely controlled switches to minimize ohmic losses. A key requirement is that the reconfigured topology has to remain radial at all times. In addition to switches, the operator can change the tap settings of remotely controlled voltage regulators and select the watt-var curves of DERs to ensure that voltages and line flows remain within specified limits. While selecting the feeder topology and optimizing the settings of regulators and PVs, the operator has to further take into account non-controllable loads and legacy devices.

To tackle this problem, this section reviews optimization models for feeders, nodal and edge constraints, and voltage-dependent loads. It should be emphasized that these are existing models. They are presented here for completeness and for setting the ground of subsequent developments: Section~\ref{sec:pv} puts forth a novel approach for designing watt-var curves for PV generators. Section~\ref{sec:radial} devises an efficient optimization model for enforcing radiality. Section~\ref{sec:vr} presents models for locally and remotely controlled voltage regulators. Building on the previous models, Section~\ref{sec:pf} formulates DNR and the numerical tests of Section~\ref{sec:tests} validates the method.

\subsection{Grid Modeling, Nodal Variables and Constraints}\label{subsec:nodal}
Before commencing with the feeder models, some preliminaries from graph theory are in order. An undirected graph $\mcG:=(\mcN,\mcE)$ is defined by a set of nodes $\mcN$ and a set of edges $\mcE$, that are incident on the nodes in $\mcN$. Any edge $e\in\mcE$ is defined by its incident nodes as $(i,j)$ with $i,~j\in\mcN$. Nodes $i$ and $j$ are said to be \emph{adjacent} if there is an edge $(i,j)$ or $(j,i)$ in $\mcE$. Edges $e_1$ and $e_2$ are adjacent if they have a common end node. A \emph{path} from node $i$ to $j$ is a sequence of adjacent edges, without repetition, starting from $i$ and terminating at $j$, such that no node is revisited. A graph $\mcG$ is \emph{connected} if there exists an $i-j$ path for all $i,~j\in\mcN$. A \emph{cycle} is a sequence of adjacent edges without repetition that starts and ends at the same node. A graph with no cycles is \emph{acyclic}. A connected and acyclic graph is a \emph{tree}. A graph $\check{\mcG}:=(\check{\mcN},\check{\mcE})$ is a subgraph of $\mcG$ if $\check{\mcN}\subseteq\mcN$ and $\check{\mcE}\subseteq\mcE$. If every edge $e\in\mcE$ is assigned a direction, the obtained graph is termed \emph{directed}.

A single-phase distribution system with $N+1$ buses can be modeled as a connected graph $\mcG(\mcN_0,\mcE)$. The nodes in $\mcN_0:=\{0,\dots,N\}$ correspond to buses; and its edges $\mcE$ to distribution lines, voltage regulators, and switches. The substation bus is indexed by $i=0$, and other buses are contained in ${\mcN:=\mcN_0\setminus\{0\}}$. The assumption of a single feeder bus (substation) is without loss of generality. The detailed generalization will be commented upon at various instances while declaring constraints relating to feeder bus. Topologically, a graph representing instances of multiple substations may be augmented by appending a virtual bus that is connected to all substations, thus acting as a single substation bus. Each edge $e=(i,j)$ is assigned a direction from the origin node $i$ to the destination node $j$. If $(i,j)\in\mcE$, then $(j,i)\notin\mcE$. 

Each bus $i\in\mcN$ is assumed to host at most one generator or load. The subset of buses hosting loads is denoted by $\mcN_\ell\subseteq\mcN$. This is without loss of generality because a bus with multiple loads and/or generations can be modeled as a set of single-load buses, all connected by non-switchable zero-impedance lines. Let $v_i$ represent the voltage magnitude and $p_i+jq_i$ the complex power injection on bus $i$. The nodal voltages and injections at all nodes in $\mcN$ can be stacked in the $N$-length vectors $\bv$ and $\bp+j\bq$, respectively. The substation voltage $v_0$ is assumed known and fixed. The general case of multiple substations can be handled by defining voltages independently for all substation buses connected to bus $0$. We do not consider this scenario to keep the presentation uncluttered.

A distribution grid may host different types of loads and DERs. Some examples include (in)elastic ZIP loads; (non)dispatchable DERs; and volt- or watt-dependent reactive power sources per the IEEE 1547.8 standard~\cite{IEEE1547}. The constraints on voltage and power injection for all nodes can be abstractly expressed as
\begin{subequations}\label{eq:limits}
	\begin{align}
	\underline{v}\bone&\leq \bv\leq \bar{v}\bone\label{seq:vol_lim}\\
	\underline{\bp}(\bv)&\leq \bp\leq \bar{\bp}(\bv) \label{seq:p_lim}\\
	\underline{\bq}(\bv,\bp)&\leq \bq\leq \bar{\bq}(\bv,\bq).\label{seq:q_lim}
	\end{align}
\end{subequations}
The individual limits are discussed next. The voltage limits may be set to the typical operational limits: The ANSI standard dictates that service voltages should remain within $\pm 5\%$~per unit (pu) \cite{ansic84}. Our model stops at the level of distribution transformers. Expecting a voltage deviation along the cable between a distribution transformer and the service voltage, the practice is to maintain voltages at distribution transformers within $\pm 3\%$~pu; see also \cite{Kersting}, \cite{Jalali19}.

The functions $\underline{\bp}(\bv)$, $\bar{\bp}(\bv)$, $\underline{\bq}(\bv,\bp)$, and $\bar{\bq}(\bv,\bq)$ apply entrywise, and depend on load and DER characteristics. Regarding loads, in steady-state analysis the voltage dependence of loads is captured by the ZIP model. According to this model, each load is a composition of a constant-impedance (Z), a constant-current (I), and a constant-power (P and Q) component. Given bus voltage magnitude $v_i$, the power injection of load $i$ is modeled as~\cite{Kersting}
\begin{align*}
p_i(v_i)&=\alpha_0^p+\alpha_1^pv_i+\alpha_2^pv_i^2\\
q_i(v_i)&=\alpha_0^q+\alpha_1^qv_i+\alpha_2^qv_i^2
\end{align*}
with all $\alpha$ coefficients being non-positive and assumed known. 

Because under normal operation voltages are close to $1$~pu, we can linearize the quadratic dependence of ZIP loads around the nominal voltage to approximate $v_i^2 \simeq 2v_i-1$; see e.g., \cite{FCL}. Then, for all buses hosting loads, the active and reactive power limits of \eqref{seq:p_lim}--\eqref{seq:q_lim} can be compactly written as
\begin{equation}\label{eq:ZIP}
[\underline{p}_i(v_i)~~\bar{p_i}(v_i)~~\underline{q}_i(v_i)~~\bar{q}_i(v_i)]^\top = \boldsymbol{\alpha_0} +v_i \boldsymbol{\alpha_{12}},~\forall i\in\mcN_\ell.
\end{equation}
If load $i$ is inelastic, then apparently $\underline{p}_i(v_i)=\bar{p_i}(v_i)$; and $\underline{p}_i(v_i)\leq \bar{p_i}(v_i)$ otherwise. Similarly for reactive power injections. Modeling of DERs is deferred to Section~\ref{sec:pv}.

\subsection{Edge Variables and Constraints}\label{sec:edge}
The edge set $\mcE$ can be partitioned into the set of switches $\mcE_S$; regulators $\mcE_R$; and fixed lines $\mcE\setminus(\mcE_R\cup\mcE_S)$. The basic network reconfiguration task aims at selecting a subset of switches to be closed. To capture which switches are closed, let us introduce the binary variables $y_e$'s for all switchable lines $e\in\mcE_S$. Variable $y_e=1$ indicates that switch $e$ is closed or connected; and vice versa. 

Let the power flow on edge $e\in\mcE$ be $P_e+jQ_e$. The power flow constraints on distribution lines may be expressed as
\begin{subequations}\label{eq:flim}
	\begin{align}
	[\underline{P}_e~\underline{Q}_e]&\leq [P_e~Q_e]\leq[\bar{P}_e~\bar{Q}_e],~\forall~e\in\mcE\setminus\mcE_S\label{seq:flim1}\\
	y_e[\underline{P}_e~\underline{Q}_e]&\leq [P_e~Q_e]\leq y_e[\bar{P}_e~\bar{Q}_e],~\forall~e\in\mcE_S.\label{seq:flim2}
	\end{align}
\end{subequations}
If switch $e$ is open ($y_e=0$), constraint \eqref{seq:flim2} sets the power flow on $e$ to zero. Else, box constraints on the power flow are enforced and usually $\underline{P}_e=-\bar{P}_e$ and $\underline{Q}_e=-\bar{Q}_e$. Although apparent power flow limits of the form $P_e^2+Q_e^2\leq S_e^2$ can be added to our formulation, they result in a mixed-integer quadratically-constrained quadratic program (MI-QCQP), which does not scale as well as an MIQP. Alternatively, apparent power constraints on flows can be handled by a polytopic approximation of $P_e^2+Q_e^2\leq S_e^2$; see \cite{Jabr18}. This approach is not taken here for clarity of presentation.

\subsection{Power Flow Model}\label{subsec:power}
To relate power injections and flows to voltages, we build upon the \emph{linearized distribution flow} (LDF) model of \cite{BW3}. Albeit approximate, the LDF model has been extensively employed for various grid optimization tasks with satisfactory accuracy~\cite{BoDo15}. By ignoring power losses, LDF postulates that the power injections for each bus $i\in \mcN$ are
\begin{subequations}\label{eq:ldf1}
	\begin{align}
	&p_i = \sum_{e:(i,j) \in \mcE}P_e - \sum_{e:(j,i) \in \mcE}P_e \label{seq:active}\\
	&q_i = \sum_{e:(i,j) \in \mcE}Q_e - \sum_{e:(j,i) \in \mcE}Q_e.\label{seq:reactive}
	\end{align}
\end{subequations}

If $r_e+jx_e$ represents the impedance of line $e:(i,j)\in\mcE$, the LDF model relates the \emph{squared} voltage magnitudes to power flows linearly as $v_i^2-v_j^2=2r_eP_e+2x_eQ_e$. Invoking the assumption of small voltage deviations, squared voltages can be approximated as $v_i^2\simeq 2v_i-1$. Then, the non-squared voltages can be substituted in the LDF model to yield 
\begin{equation}\label{eq:ldf2}
v_i-v_j=r_eP_e+x_eQ_e
\end{equation}
for each line $e:(i,j)\in\mcE\setminus(\mcE_R\cup\mcE_S)$. The approximate voltage drop model of \eqref{eq:ldf2} can be alternatively derived by linearizing the power flow equations at the flat voltage profile~\cite{BoDo15}, \cite{Deka1}, \cite{TJKT20}. 

For switchable lines in $\mcE_S$, the voltage drop of \eqref{eq:ldf2} applies only if the switch is closed, that is
\begin{equation}\label{eq:ldf3}
y_e(v_i-v_j-r_eP_e-x_eQ_e)=0,\quad\forall~e:(i,j)\in\mcE_S.
\end{equation} 
The bilinear products appearing in \eqref{eq:ldf3} such as $y_ev_i$ are handled using \emph{McCormick linearization}, which is briefly reviewed next.

McCormick linearization replaces the product of variables by their linear convex envelopes to yield a relaxation of the original non-convex feasible set~\cite{McCormick1976}. 
If at most one of the factor variables is continuous and the rest are binary, the relaxation becomes \emph{exact}. Take for instance the product $z=xy$ over a binary variable $x\in\{0,1\}$, and a continuous variable $y$ bounded within $y\in[\underline{y},\bar{y}]$. The constraint $z=xy$ can be equivalently expressed as four linear equality constraints
\begin{subequations}\label{eq:MC}
	\begin{align}
	x\underline{y}&\leq z\leq x\bar{y},\label{seq:MC1}\\
	y+(x-1)\bar{y}&\leq z\leq y +(x-1)\underline{y}\label{seq:MC2}.
	\end{align}
\end{subequations}
To see the equivalence, evaluate $x=0$ in \eqref{eq:MC} to get $z=0$, and evaluate $x=1$ to get $z=y$. For a continuous-binary bilinear product, the McCormick linearization is equivalent to the so called big-$M$ trick. However, particular emphasis on tight bounds $y\in[\underline{y},\bar{y}]$ in McCormick linearization tends to provide numerical superiority. All continuous-binary bilinear products encountered henceforth will be handled by McCormick linearization. The resulting linear inequalities of \eqref{eq:MC} will not be provided explicitly for brevity. 


\section{Designing Watt-Var Control Curves for DERs}\label{sec:pv}
This section specifies the power injection limits of \eqref{seq:p_lim}--\eqref{seq:q_lim} for DERs. Conventionally, DERs have been modeled as constant-power sources operating at unit power factor~\cite{Turitsyn11}. With smart DERs featuring enhanced sensing, communication, and actuation, the IEEE 1547.8 standard mandates DERs to provide reactive power support~\cite{IEEE1547}. According to the standard, the reactive power injection of DERs can follow four possible modes~\cite{IEEE1547}: \emph{i)} constant power factor; \emph{ii)} voltage-dependent reactive power (volt-var); \emph{iii)} active power-dependent reactive power (watt-var); and \emph{iv)} constant reactive power mode. 

The volt-var and watt-var dependencies are captured by control rules described by piecewise affine functions; see Fig.~\ref{fig:curve}. The operator may change these rules on a daily, hourly, or near-real-time basis. To effectively integrate DERs, their control rules should be decided optimally based on feeder and loading conditions. To this end, it is henceforth assumed that DERs are operating in the watt-var mode and their parameters are selected and kept fixed per periods of say 4 hours. 

\begin{figure}[t]
	\centering
	\includegraphics[scale=0.35]{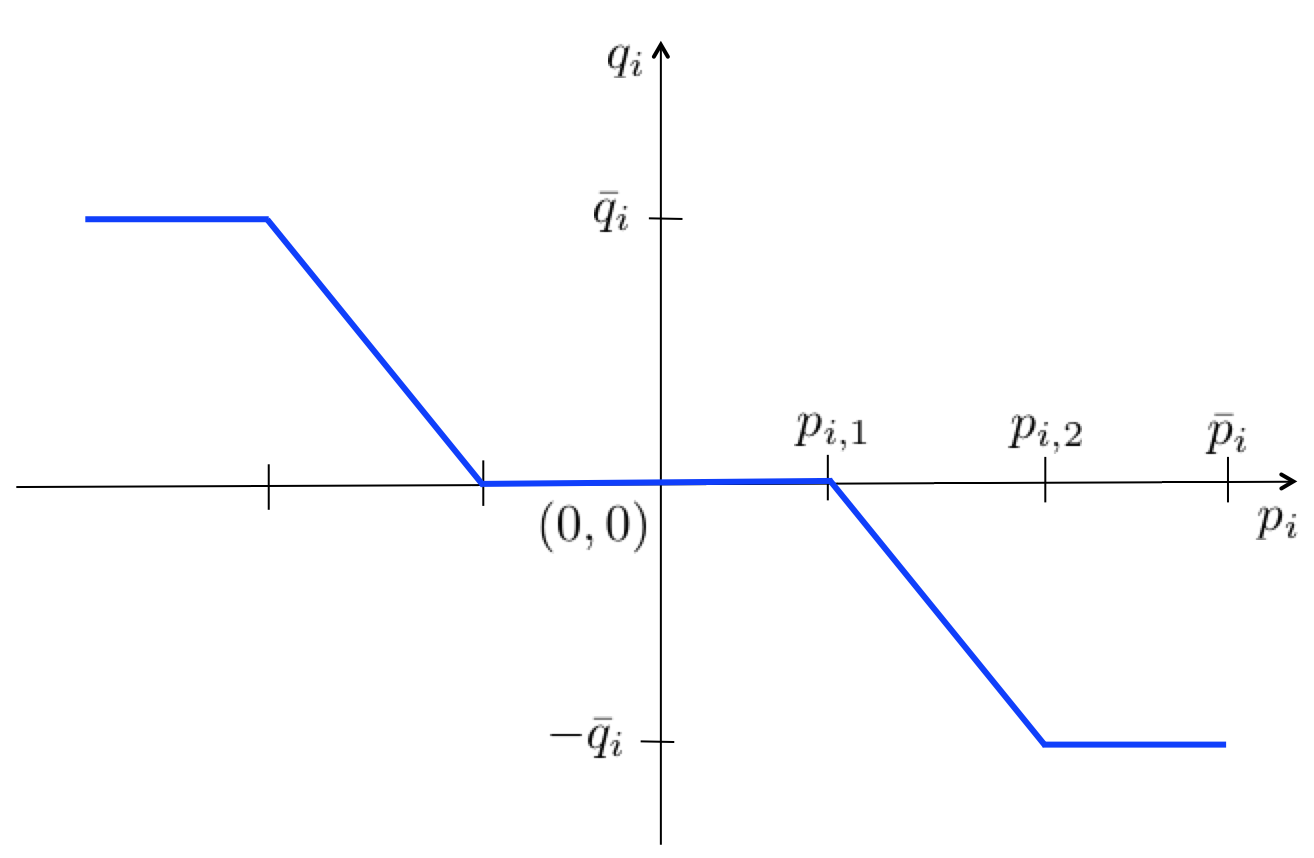}
	\caption{Active power-reactive power (watt/var) DER control curve~\cite{IEEE1547}.}
	\label{fig:curve}
\end{figure}

The watt-var inverter control is implemented via the piecewise affine rules of Fig.~\ref{fig:curve}. The left half applies to DERs featuring active power absorption, such as energy storage units. To simplify the exposition, we consider DERs operating in the right halfspace of the watt-var rule, that is DER that only inject active power to the feeder (e.g., renewable generation). Given the rated reactive power capacity $\bar{q}_i$ for the $i$-th DER, the controllable parameters are $p_{i,1}$ and $p_{i,2}$. The IEEE 1547.8 standard further constraints $(p_{i,1},p_{i,2})$ so that
\begin{subequations}\label{eq:plim}
\begin{align}	
0.4\bar{p}_i & \leq p_{i,1}\leq 0.8 \bar{p}_i\label{seq:plim1}\\
p_{i,1} + 0.1\bar{p}_i & \leq p_{i,2}\leq \bar{p}_i\label{seq:plim2}
\end{align}
\end{subequations}
where $\bar{p}_i$ is the rated active power for DER $i$. These specifications are set by the standard to ensure a substantial deadband and to avoid steep slopes in Fig.~\ref{fig:curve}.

Given $(p_{i,1},p_{i,2})$, the reactive power injection of DER $i$ depends on its active power injection as
\begin{equation}\label{eq:rule}
q_i(p_i)=\left\{\begin{array}{ll}
0&,~ 0\leq p_i\leq p_{i,1}\\
\frac{-\bar{q}_i(p_i-p_{i,1})}{p_{i,2}-p_{i,1}} &,~ p_{i,1}\leq p_i\leq p_{i,2}\\
-\bar{q}_i &,~ p_{i,2}\leq p_i\leq \bar{p}_i
\end{array}\right.
\end{equation}
The control rule of \eqref{eq:rule} induces a non-linear equality constraint between the optimization variables $q_i$, $p_{i,1}$, and $p_{i,2}$. We next capture this constraint via a novel MILP model. 

Let us introduce the binary variables $(\delta_{i,1},\delta_{i,2},\delta_{i,3})$ to pick which of the three segments in \eqref{eq:rule} is active each time:
\begin{subequations}\label{eq:delta}
\begin{align}	
&(\delta_{i,1},\delta_{i,2},\delta_{i,3})\in\{0,1\}^3 \label{eq:delta1}\\
&\delta_{i,1}+\delta_{i,2}+\delta_{i,3}=1.\label{eq:delta2}
\end{align}
\end{subequations}
The selection of a segment depends on the value of $p_i$ as
\begin{equation}\label{eq:pdelta}
\delta_{i,2}p_{i,1} + \delta_{i,3}p_{i,2} \leq p_i \leq \delta_{i,1}p_{i,1} + \delta_{i,2}p_{i,2} + \delta_{i,3}\bar{p}_i.
\end{equation}
Then, the rule of \eqref{eq:rule} can be expressed by the constraint
\begin{equation}\label{eq:rule2}
q_i=\delta_{i,1}\cdot 0 - \delta_{i,2} \frac{\bar{q}_i(p_i-p_{i,1})}{p_{i,2}-p_{i,1}} -\delta_{i,3}\bar{q}_i.
\end{equation}
Although \eqref{eq:pdelta} involves binary-continuous variable products, and can be thus handled by MacCormick linearization, that is not the case for \eqref{eq:rule2}. Unfortunately, the latter entails ratios or products among continuous variables. 

To bypass this difficulty, the key idea here is to parameterize Figure~\ref{fig:curve} using the slope/intercept of its middle segment instead of the breakpoints $(p_{i,1},p_{i,2})$. If the middle segment of \eqref{eq:rule} is denoted by $q_i(p_i)=\beta_ip_i+\gamma_i$ for some negative $(\beta_i,\gamma_i)$, then \eqref{eq:rule2} is equivalent to
\begin{equation}\label{eq:rule3}
q_i=\delta_{i,2}(\beta_ip_i+\gamma_i)-\delta_{i,3}\bar{q}_i.
\end{equation}
Different from \eqref{eq:pdelta}, constraint \eqref{eq:rule3} involves only binary-continuous variable products. 

We next reformulate \eqref{eq:pdelta} in terms of $(\beta_i,\gamma_i)$. Because the line $q_i(p_i)=\beta_ip_i+\gamma_i$ passes through the points $(p_{i,1},0)$ and $(p_{i,2},-\bar{q}_i)$, we get that
\begin{equation}\label{eq:pbg}
p_{i,1}=-\frac{\gamma_i}{\beta_i} \quad \quad \text{and}\quad \quad p_{i,2}=-\frac{\bar{q}_i+\gamma_i}{\beta_i}.
\end{equation}
Plugging \eqref{eq:pbg} into \eqref{eq:pdelta}; multiplying all sides by $\beta_i<0$; adding $\gamma_i$; and using \eqref{eq:delta2}, eventually provides 
\begin{equation}\label{eq:pbg2}
\delta_{i,3}(\gamma_i-\beta_i\bar{p}_i)-\delta_{i,2}\bar{q}_i\leq\beta_ip_i+\gamma_i \leq\delta_{i,1}\gamma_i -\delta_{i,3}\bar{q}_i
\end{equation}
which is still amenable to McCormick linearization.

The control rule of \eqref{eq:rule} is equivalent to \eqref{eq:delta}, \eqref{eq:rule3}, and \eqref{eq:pbg2}. With the help of McCormick linearization, the latter can be posed as an MILP model. The aforesaid model captures the piecewise control rule, but do not enforce the limitations of \eqref{eq:plim}. To capture these IEEE 1547.8 requirements, we will translate the constraints on $(p_{i,1},p_{i,2})$ to constraints on $(\beta_i,\gamma_i)$. Substituting \eqref{eq:pbg} into \eqref{eq:plim} implies that $(\beta_i,\gamma_i)$ should satisfy
\begin{subequations}\label{eq:wvar2}
\begin{align}	
-0.4\bar{p}_i\beta_i&\leq \gamma_i \leq -0.8\bar{p}_i\beta_i\\
\bar{p}_i\beta_i+\gamma_i&\leq -\bar{q}_i\leq 0.1\bar{p}_i\beta_i.
\end{align}
\end{subequations}
To summarize, the control curve for DER $i$ is optimally tuned via variables $(\beta_i,\gamma_i)$ that satisfy \eqref{eq:delta}, \eqref{eq:rule3}, \eqref{eq:pbg2}, and \eqref{eq:wvar2}. To the best of our knowledge, this is the first model to optimally design the IEEE 1547 control curves for DERs.

\section{Ensuring Radial Topologies}\label{sec:radial}
Ensuring a graph is radial is of central importance to various grid optimization tasks. In grids with a single power source and no DERs, enforcing radiality entails limiting the number of edges with incoming flow to one per bus~\cite{TH12}. In the presence of DERs, a bus may receive power from multiple edges even if the grid is radial. To handle such networks, the model of \cite{JSP12} enforces an edge orientation so that each bus has a single parent bus. Despite its extensive use in the grid topology reconfiguration/restoration literature, counterexamples where this parent-child model produces disconnected graphs do exist~\cite{Ahmadi15}. A dual graph-based model was suggested in \cite{Ahmadi15}, yet it is limited to planar graphs. For a general network, cycles can be avoided by imposing that the number of connected edges on each cycle to be less than the length of the cycle~\cite{SKL19GM}. Despite its generality, this cycle-elimination approach can lead to exponentially many constraints. One of the most popular radiality model ensures connectivity of loads to DERs via the power flow equations, and connects DERs to the substation via flows of a virtual commodity~\cite{LFRR12}. The tightness of a linear programming relaxation for this model has also been recently reported in~\cite{LCSH19}. In this section, we advance upon the commodity flow approach and propose a more succinct model with fewer variables and constraints. Moreover, this is the first time the commodity flow model is supported with a formal proof. 

Given the complete graph $\mcG(\mcN_0,\mcE)$, define a subgraph $\check{\mcG}(\mcN_0,\check{\mcE})$, such that $\check{\mcE}:=\mcE\setminus\{e:e\in\mcE_S, y_e=0\}$. The subgraph $\check{\mcG}$ represents the reconfigured distribution network. To capture the line infrastructure of $\mcG$, define its $|\mcE|\times(N+1)$ branch-bus incidence matrix $\tbA$ with entries
\begin{equation*}
\tilde{A}_{e,k}:=
\begin{cases}
+1&,~k=i\\
-1&,~k=j\\
0&,~\text{otherwise}
\end{cases}~\forall~e=(i,j)\in\mcE.
\end{equation*}
Separate the first column $\ba_0$ of $\tbA$ related to the substation bus $0$ as $\tbA=[\ba_0~\bA]$, to get the\emph{reduced} branch-bus incidence matrix $\bA$. Similarly, let $\cbA\in\mathbb{R}^{|\mcE'|\times N}$ represent the reduced branch-bus incidence matrix of subgraph $\mcG'$. The next claim establishes an efficient model for imposing graph connectivity.

\begin{proposition}\label{th:1}
A graph $\check{\mcG}(\mcN_0,\check{\mcE})$ with reduced branch-bus incidence matrix $\cbA$ is connected if and only if there exists a vector $\bef\in\mathbb{R}^{|\check{\mcE}|}$, such that 
\begin{equation}\label{eq:th1}
\cbA^\top\bef=\bone.
\end{equation}
\end{proposition}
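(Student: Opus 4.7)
The plan is to prove both implications separately, viewing any feasible $\bef$ as a net flow that injects one unit at every non-substation bus and draws $N$ units at the substation.

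For the ``if'' direction (flow $\Rightarrow$ connected) I would argue by contradiction: assume $\check{\mcG}$ has a connected component $\mcC$ not containing the substation $0$, and let $\bone_\mcC\in\mathbb{R}^N$ denote its indicator vector. Every edge of $\check{\mcG}$ has either both endpoints in $\mcC$ (so the corresponding row of $\cbA$ contributes $+1-1=0$ to $\cbA\bone_\mcC$) or both endpoints outside $\mcC$ (contributing $0$), whence $\cbA\bone_\mcC=\mathbf{0}$. Left-multiplying $\cbA^\top\bef=\bone$ by $\bone_\mcC^\top$ then yields the contradiction $0=(\cbA\bone_\mcC)^\top\bef=\bone_\mcC^\top\bone=|\mcC|>0$.

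For the ``only if'' direction (connected $\Rightarrow$ flow) I would build $\bef$ explicitly by superposing unit path flows. For each $k\in\mcN$, connectivity yields a simple path $P_k$ in $\check{\mcG}$ from $k$ to $0$; encode it by the \emph{signed edge-incidence vector} $\bp_k\in\mathbb{R}^{|\check{\mcE}|}$ whose entry on $e\in P_k$ equals $+1$ if $P_k$ traverses $e$ in the direction of its stored orientation and $-1$ otherwise, with zeros on edges not in $P_k$. A direct check along the path shows that the contributions at every interior vertex cancel in pairs, leaving $+1$ at the source $k$; the $-1$ that would arise at $0$ is pruned upon dropping the substation column, so $\cbA^\top\bp_k=\be_k$. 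Summing then gives $\bef:=\sum_{k\in\mcN}\bp_k$ with $\cbA^\top\bef=\sum_{k\in\mcN}\be_k=\bone$.

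I expect the sign bookkeeping in the ``only if'' direction to be the only delicate point: because each edge carries a fixed orientation but a path may traverse it either way, one must verify that the two terms contributed by consecutive edges at an interior node cancel regardless of orientation. Beyond that, both directions reduce to short linear-algebra arguments. It is worth observing that $\bef$ is unrestricted in sign and need not be unique, which is precisely what makes the characterization convenient to embed in an MILP; acyclicity, required for full radiality, must still be imposed separately (e.g., via an edge count).
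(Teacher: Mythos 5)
Your proof is correct, and it both streamlines and completes the paper's argument. For the direction ``$\exists\,\bef\Rightarrow$ connected,'' the paper argues from the same core fact you use --- a connected component $\mcS$ with $0\notin\mcN_\mcS$ has a full (non-reduced) incidence matrix $\bA_\mcS$ satisfying $\bA_\mcS\bone=\bzero$, hence $\bone\notin\range(\bA_\mcS^\top)$ --- but it reaches the contradiction by reordering rows and columns to put $\cbA$ into block-diagonal form and reading off the block equation $\bA_\mcS^\top\bef_\mcS=\bone$. Your version, which left-multiplies $\cbA^\top\bef=\bone$ by the component's indicator vector and uses $\cbA\bone_\mcC=\bzero$, collapses that bookkeeping into a single inner product; it is the same idea in dual form and arguably cleaner (your case check that no edge of $\check{\mcG}$ straddles the component, and that edges touching bus $0$ lose that entry in the reduced matrix, is exactly what makes $\cbA\bone_\mcC=\bzero$ hold). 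The more substantive difference is the converse: the paper's proof is phrased entirely as a contradiction of ``not connected and $\bef$ exists,'' so it only establishes that \eqref{eq:th1} implies connectedness, leaving ``connected $\Rightarrow\exists\,\bef$'' to the informal unit-injection interpretation offered after the proposition. Your explicit superposition of signed path-flow vectors $\bp_k$ with $\cbA^\top\bp_k=\be_k$ supplies that missing half constructively, and your orientation bookkeeping at interior vertices checks out in both traversal cases; an equivalent shortcut would be to invoke that the reduced incidence matrix of a connected graph on $N+1$ nodes has full column rank $N$, so $\cbA^\top$ is onto $\mathbb{R}^N$. Your closing observation --- that \eqref{eq:th1} enforces only connectivity and that acyclicity must be added separately via an edge count --- matches the paper's use of \eqref{eq:tree}.
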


\begin{proof}
Proving by contradiction, suppose $\check{\mcG}(\mcN_0,\check{\mcE})$ is not connected, and there exists $\bef\in\mathbb{R}^{|\check{\mcE}|}$ satisfying \eqref{eq:co}. If $\check{\mcG}(\mcN_0,\check{\mcE})$ is not connected, then there must exist a connected component, that is a maximal connected subgraph $\check{\mcG}_\mcS(\mcN_\mcS, \check{\mcE}_\mcS)$, such that $\mcN_\mcS\subset\mcN_0$ and $0\notin\mcN_\mcS$. Let $\bA_\mcS$ be the branch-bus incidence matrix of $\check{\mcG}_\mcS$. By definition, it holds that $\bA_\mcS\bone=\bzero$. The fundamental theorem of linear algebra implies
\begin{equation}\label{eq:range}
\bone\in(\range(\bA_\mcS^\top))^\perp\quad \text{or} \quad \bone\notin\range(\bA_\mcS^\top).
\end{equation}

By hypothesis, graph $(\mcN_\mcS, \check{\mcE}_\mcS)$ is a maximal connected subgraph of $\check{\mcG}$, and hence, there is no edge $(i,j)\in\check{\mcE}$ with $i\in\mcN_\mcS$ and $j\in\bar{\mcN}_{\mcS}$ where $\bar{\mcN}_{\mcS}:=\mcN_0\setminus\mcN_\mcS$. Since the order of edges and nodes forming the rows and columns of $\cbA$ are arbitrary, without loss of generality, partition $\cbA$ as
\begin{equation*}
\cbA=
\begin{bmatrix}
\bA_{\bar{\mcS}}& \bzero\\
\bzero & \bA_\mcS
\end{bmatrix}.
\end{equation*}
Heed that $\bA_{\bar{\mcS}}$ is a \emph{reduced} branch-bus incidence matrix, whereas $\bA_\mcS$ is a complete branch-bus incidence matrix since $0\notin\mcN_\mcS$. Partitioning $\bef$ conformably to $\cbA$, equation \eqref{eq:th1} reads
$$\begin{bmatrix}
\bA_{\bar{\mcS}}^\top& \bzero\\
\bzero & \bA_\mcS^\top
\end{bmatrix}\begin{bmatrix}
\bef_{\bar{\mcS}}\\\bef_\mcS
\end{bmatrix}=\bone.$$
The second block implies that $\bA_\mcS^\top\bef_\mcS=\bone$, which contradicts \eqref{eq:range} and completes the proof.
\end{proof}

To provide some circuit theoretic intuition on Proposition~\ref{th:1}, vector $\bef$ represents flows on $\check{\mcE}$ resulting from a unit injection at all network nodes except for node $i=0$. For this flow setup to be feasible, there must be a withdrawal of $N$ units at node $0$, and the injection from all nodes in $\mcN$ must have a path to reach node $0$. Having all nodes connected to node $0$ entails a single connected component. It is worth emphasizing that variable $\bef$ does not relate to the actual line flows and is introduced only to enforce connectivity. 

The condition for connectedness of Proposition~\ref{th:1} is defined on matrix $\cbA$, which depends on the switch status variables $y_e$'s. Notice that $\cbA$ is derived from $\bA$ by removing the rows related to open switches. Therefore, condition \eqref{eq:th1} can be expressed with respect to the original matrix $\bA$, by forcing the virtual flows in $\bef$ to be zero for open lines. The following corollary establishes a convenient constraint for connectedness to be used later in our network reconfiguration problem. 

\begin{corollary}\label{co:1}
Let $\bA$ be the reduced branch-bus incidence matrix of $\mcG$, and $\check{\mcG}\subseteq\mcG$ be a subgraph defined by opening switches $\{e\in\mcE_S:y_e=0\}$. Subgraph $\check{\mcG}$ is connected if and only if there exists $\bef\in\mathbb{R}^{|\mcE|}$ such that
\begin{subequations}\label{eq:co}
	\begin{align}
	\bA^\top\bef&=\bone\label{seq:co1},~\text{and}\\
	-y_eN&\leq f_e \leq y_eN,\quad\forall e\in\mcE_S\label{seq:co2}.
	\end{align}
\end{subequations}
\end{corollary}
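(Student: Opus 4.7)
The plan is to prove the corollary by reducing it to Proposition~\ref{th:1} via a straightforward correspondence between flow vectors on $\mcE$ and flow vectors on $\check{\mcE}$. Since $\check{\mcG}$ differs from $\mcG$ only by the removal of the rows of $\bA$ indexed by open switches, the matrix $\cbA$ is exactly the submatrix of $\bA$ obtained by deleting those rows. This structural fact will let me translate the condition $\cbA^\top\check{\bef}=\bone$ of Proposition~\ref{th:1} into the form \eqref{eq:co} stated over the full edge set.

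For the ``only if'' direction, I would assume $\check{\mcG}$ is connected and invoke Proposition~\ref{th:1} to obtain a vector $\check{\bef}\in\mathbb{R}^{|\check{\mcE}|}$ with $\cbA^\top\check{\bef}=\bone$. I would then define $\bef\in\mathbb{R}^{|\mcE|}$ by setting $f_e=\check{f}_e$ for every $e\in\check{\mcE}$ and $f_e=0$ for every open switch $e\in\mcE_S$ with $y_e=0$. The zero entries make the rows of $\bA$ corresponding to open switches contribute nothing to $\bA^\top\bef$, so \eqref{seq:co1} reduces to $\cbA^\top\check{\bef}=\bone$, which holds by construction; \eqref{seq:co2} is satisfied trivially on open switches. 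The subtlety is that $\check{\bef}$ must additionally obey $|\check{f}_e|\leq N$ on closed edges so that \eqref{seq:co2} holds there with $y_e=1$. I would argue that any connected component containing node $0$ admits a spanning tree and routing the unit nodal injections of $\mcN$ toward the root along this tree produces a valid $\check{\bef}$ in which each edge carries the net injection of a subtree of at most $N$ nodes; hence $|\check{f}_e|\leq N$. This construction is always available, so Proposition~\ref{th:1} may be applied with the additional magnitude bound at no loss.

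For the ``if'' direction, I would start from a $\bef$ satisfying \eqref{eq:co}. Constraint \eqref{seq:co2} forces $f_e=0$ for every open switch, so discarding these zero entries yields a vector $\check{\bef}\in\mathbb{R}^{|\check{\mcE}|}$ whose contribution to $\bA^\top\bef$ coincides with $\cbA^\top\check{\bef}$. Thus \eqref{seq:co1} collapses to $\cbA^\top\check{\bef}=\bone$, and Proposition~\ref{th:1} immediately implies that $\check{\mcG}$ is connected.

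The main obstacle I anticipate is not the algebraic bookkeeping, which is routine, but justifying that the magnitude cap $N$ in \eqref{seq:co2} does not spuriously cut off feasible flow vectors in the ``only if'' direction. The tree-routing argument above resolves this, since it exhibits a constructive $\bef$ whose components are trivially bounded by the total net injection $N$ at the root. Once this bound is established, the equivalence with Proposition~\ref{th:1} is immediate and the corollary follows.
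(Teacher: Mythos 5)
Your proof is correct and follows essentially the same route as the paper, which treats the corollary as an immediate consequence of Proposition~\ref{th:1} after observing that $\cbA$ is just $\bA$ with the rows of the open switches deleted and that \eqref{seq:co2} zeroes the corresponding virtual flows. Your spanning-tree routing argument for why the cap $|f_e|\leq N$ on closed switches never excludes a feasible flow is a welcome addition: the paper asserts this bound without justification, and your observation that routing the $N$ unit injections toward node $0$ along a spanning tree bounds every edge flow by $N$ supplies exactly the missing detail.
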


Constraint \eqref{seq:co2} implies that the virtual flows on open switches are zero, and bounds the flows on closed switches within $[-N,~N]$. Once a $\check{\mcG}$ is ensured to be connected, the requirement of radiality can be readily enforced as 
\begin{equation}\label{eq:tree}
\sum_{e\in\mcE_S}y_e=N-|\mcE\setminus\mcE_S|
\end{equation}
to ensure the total number of connected edges is $N$.

\section{Modeling Voltage Regulators}\label{sec:vr}
In addition to the DER control settings and its topology, the voltage profile of a feeder is dependent on voltage regulators. This section develops novel models for regulators, which are later used in our grid reconfiguration formulation. 

We model voltage regulators as ideal transformers. This is without loss of generality because the impedance of a non-ideal regulator can be modeled as a line connected in series with the ideal regulator. An ideal regulator scales its secondary-side voltage by $\pm10\%$ on increments of $0.625\%$ using tap positions~\cite{hiskens2016tap}. Consider a regulator modeled by edge $e:(i,j)\in\mcE_R$. Its voltage transformation ratio can be set to $1+0.00625\cdot t_e$, where $t_e\in\{0,\pm1,\ldots,\pm 16\}$ is its tap position. We consider two classes of regulators~\cite{Kersting}: 

\subsubsection{Locally controlled regulators} Collect such regulators in set $\mcE_R^L\subseteq\mcE_R$. A locally controlled regulator $e:(i,j)\in\mcE_R$ is programmed to maintain $v_j$ within a given range $[\underline{v}_j,\bar{v}_j]$. The regulator changes its taps after a time delay until $v_j$ is brought within $[\underline{v}_j,\bar{v}_j]$, unless an extreme tap position has been reached. Ignoring the time delay, this operation is illustrated in Figure~\ref{fig:dvr} and described by
\begin{equation}\label{eq:DVR}
v_j(v_i)=\left\{\begin{array}{ll}
1.1\cdot v_i&,~v_i\leq \frac{\underline{v}_j}{1.1}\\
\left[\underline{v}_j,\overline{v}_j\right]&,~ \frac{\underline{v}_j}{1.1}< v_i <\frac{\overline{v}_j}{0.9}\\
0.9\cdot v_i&,~v_i\geq \frac{\overline{v}_j}{0.9}
\end{array}\right..
\end{equation}
The first branch relates to the case where the primary voltage $v_i$ is quite low and even with $t_e=+16$, the secondary voltage $v_j=1.1\cdot v_i$ remains below $\underline{v}_j$. Likewise, the third branch relates to the case where the tap has reached its minimum of $t_e=-16$. Normal operation is captured by the second branch, where $v_j$ is successfully regulated within $[\underline{v}_j,\overline{v}_j]$. 

\begin{figure}[t]
	\centering
	\includegraphics[scale=0.43]{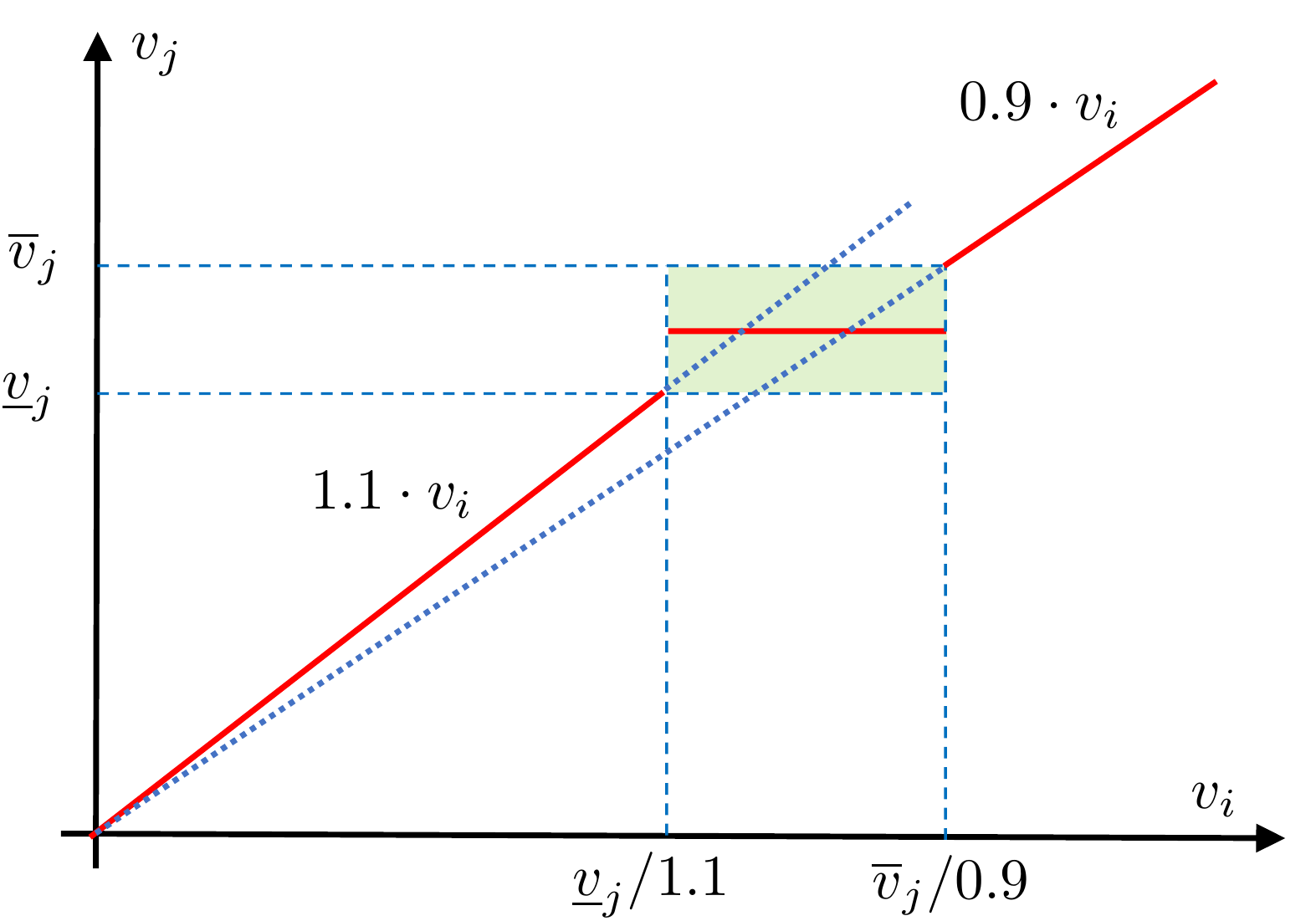}
	\caption{Locally controlled regulator characteristic: The left/rightmost segments occur when regulator taps have maxed out. Within the middle green box, the secondary voltage is successfully regulated. Lacking the actual tap position, this middle area is approximated by its midpoint (reference voltage).}
	\label{fig:dvr}
\end{figure}

For a locally controlled regulator, the operator cannot fully monitor and/or control the exact tap position. Because of this, we propose approximating the second branch of \eqref{eq:DVR} by setting $v_j$ at the mid-point of the regulation range, that is
\begin{equation*}
v_j(v_i)= \frac{\overline{v}_j+ \underline{v}_j }{2}\quad \text{when}\quad v_i\in\left(\frac{\underline{v}_j}{1.1},\frac{\overline{v}_j}{0.9}\right).
\end{equation*}
Since the regulation range typically spans $2$--$4$ taps or $0.0125$--$0.025$~pu~\cite{hiskens2016tap}, this approximation incurs negligible modeling error. The actual and approximate models for locally controlled regulators are illustrated in Fig.~\ref{fig:dvr}.

The regulator operation of Fig.~\ref{fig:dvr} can be modeled in a fashion similar to the watt-var curve of \eqref{eq:rule}. Let $\{\delta_{e,1},\delta_{e,2}, \delta_{e,3}\}$ be the binary variables selecting the three regions of operation. The approximate model for a locally controlled regulator $e\in\mcE_R^L$ is captured by the constraints
\begin{subequations}\label{eq:lvr}
	\begin{align}
	&v_i\geq0.8\delta_{e,1}+\frac{\underline{v}_j}{1.1}\delta_{e,2}+ \frac{\bar{v}_j}{0.9}\delta_{e,3} \label{seq:lvr1}\\
	&v_i\leq\frac{\underline{v}_j}{1.1}\delta_{e,1}+\frac{\bar{v}_j}{0.9} \delta_{e,2}+1.2\delta_{e,3} \label{seq:lvr2}\\
	&v_j=1.1\delta_{e,1}v_i+\delta_{e,2}\left(\frac{\overline{v}_j+\underline{v}_j}{2}\right)+0.9\delta_{e,3}v_i \label{seq:lvr4}\\
	&\delta_{e,1}+\delta_{e,2}+\delta_{e,3}=1 \label{seq:lvr3}\\
	&(\delta_{i,1},\delta_{i,2},\delta_{i,3})\in\{0,1\}^3.\label{seq:lvr5}
	\end{align}
\end{subequations}
Constraints \eqref{seq:lvr3}--\eqref{seq:lvr5} ensure that exactly one indicator variable gets activated. Constraints \eqref{seq:lvr1}--\eqref{seq:lvr2} capture the value of the primary voltage per region of Fig.~\ref{fig:dvr}. Constraint \eqref{seq:lvr4} captures the behavior of the secondary voltage per region of Fig.~\ref{fig:dvr}. The binary-continuous variable products in \eqref{seq:lvr4} can be handled via McCormick linearization. Note that $0.8$~pu and $1.2$~pu are arbitrarily chosen as the extreme voltage limits for defining the range of the primary voltage.

\subsubsection{Remotely controlled regulators} These regulators comprise the set $\bmcE_R^L:=\mcE_R\setminus\mcE_R^L$. If $e:(i,j)\in\bmcE_R^L$, its tap $t_e$ can be changed remotely by the operator. It hence becomes a control variable taking one of $33$ possible values. These values can be encoded using $6$~bits~\cite{WuTianZhang17}. For example, the binary code $100001$ corresponds to tap $t_e=+16$; code $010000$ corresponds to the neutral position $t_e=0$; and $000000$ to $t_e=-16$. Then, voltage $v_j$ relates to $v_i$ as
\begin{subequations}\label{eq:rvr}
	\begin{align}
	&v_j=\left(0.9+0.00625\cdot\sum_{k=0}^5b_{e,k} 2^k\right)v_i \label{seq:rvr1}\\
	&b_{e,k}\in\{0,1\},\quad k=0,\ldots,5\label{seq:rvr2}
	\end{align}
\end{subequations}
with the parenthesis being the binary encoding of tap $t_i$. The products $b_{e,k}v_i$ can be handled by McCormick linearization.

\section{Problem Formulation}\label{sec:pf}
Having modeled the major grid assets, we can now formulate the optimal grid reconfiguration task. 
Consider an operating period of $4$~hr. Before the start of this period, the operator collects minute-based data capturing the anticipated load and solar generation, and partition them into 15-min intervals. Then, from each 15-min interval, the operator selects $S$ samples, yielding a total of $T=16S$ samples for the upcoming 4-hr period. These samples are indexed by $t=1,\ldots,T$. Instead of sampling, the operator may use the averages per 15-min interval. The data related to sample $t$ are collectively denoted by vector $\btheta^t$. The operator would like to minimize the power losses summed up over all $T$ instances. Each one of the $T$ instances will be experiencing different power flow conditions. Nonetheless, all intervals share the same feeder topology, DER curves, and regulator settings. To capture this, we group optimization variables as
\begin{subequations}\label{eq:omega}
	\begin{align*}
	&\bomega_1:=\{\{y_e\}_{e\in\mcE_S},\{\beta_i,\gamma_i\}_{i\in\mcN\setminus\mcN_\ell},\{b_{e,k}\}_{e\in\mcE_R\setminus\mcE_R^L}\};~\text{and}\\
	&\bomega_2^t:=\{\bv^t,\bp^t,\{q_i^t,\delta_{i,k}^t\}_{i\in\mcN\setminus\mcN_\ell},\{\delta_{e,k}^t\}_{e\in\mcE_R^L},\bP^t,\bQ^t\},~\forall t.
	\end{align*}
\end{subequations}
The ultimate goal is to determine $\bomega_1$, that is a tree topology, inverter watt-var parameters, and regulator tap settings. The grid would then be allowed to operate autonomously using local rules per interval $t$ yielding variables $\{\bomega_2^t\}_{t=1}^T$.
 
The grid reconfiguration task can now be posed as
\begin{align*}\label{eq:DNR}
\min~&~\sum_{t\in\mcT}\sum_{e\in\mcE\setminus\mcE_R} r_e(P_{e,t}^2+Q_{e,t}^2)\tag{DNR}\\
\mathrm{over}~& ~ \bomega_1,\{\bomega_2^t\}_{t=1}^T\nonumber\\
\mathrm{s.to}~& ~\eqref{eq:limits}-\eqref{eq:ldf3},\eqref{eq:delta}, \eqref{eq:rule3}, \eqref{eq:pbg2}, \eqref{eq:wvar2},\eqref{eq:co},\eqref{eq:tree},\eqref{eq:lvr},\eqref{eq:rvr}~~\forall t.
\end{align*}
The cost function approximates the ohmic losses along all lines and times~\cite{BW3}, \cite{TH12}. When computing losses, only closed lines should be considered. However \eqref{eq:flim} entails that for open switches, the power flows are zero. This enables us to write the cost in \eqref{eq:DNR} regardless of the indicator variables $y_e^t$ for switchable lines.

\section{Numerical Tests}\label{sec:tests}
\begin{figure}[t]
	\centering
	\includegraphics[scale=0.3]{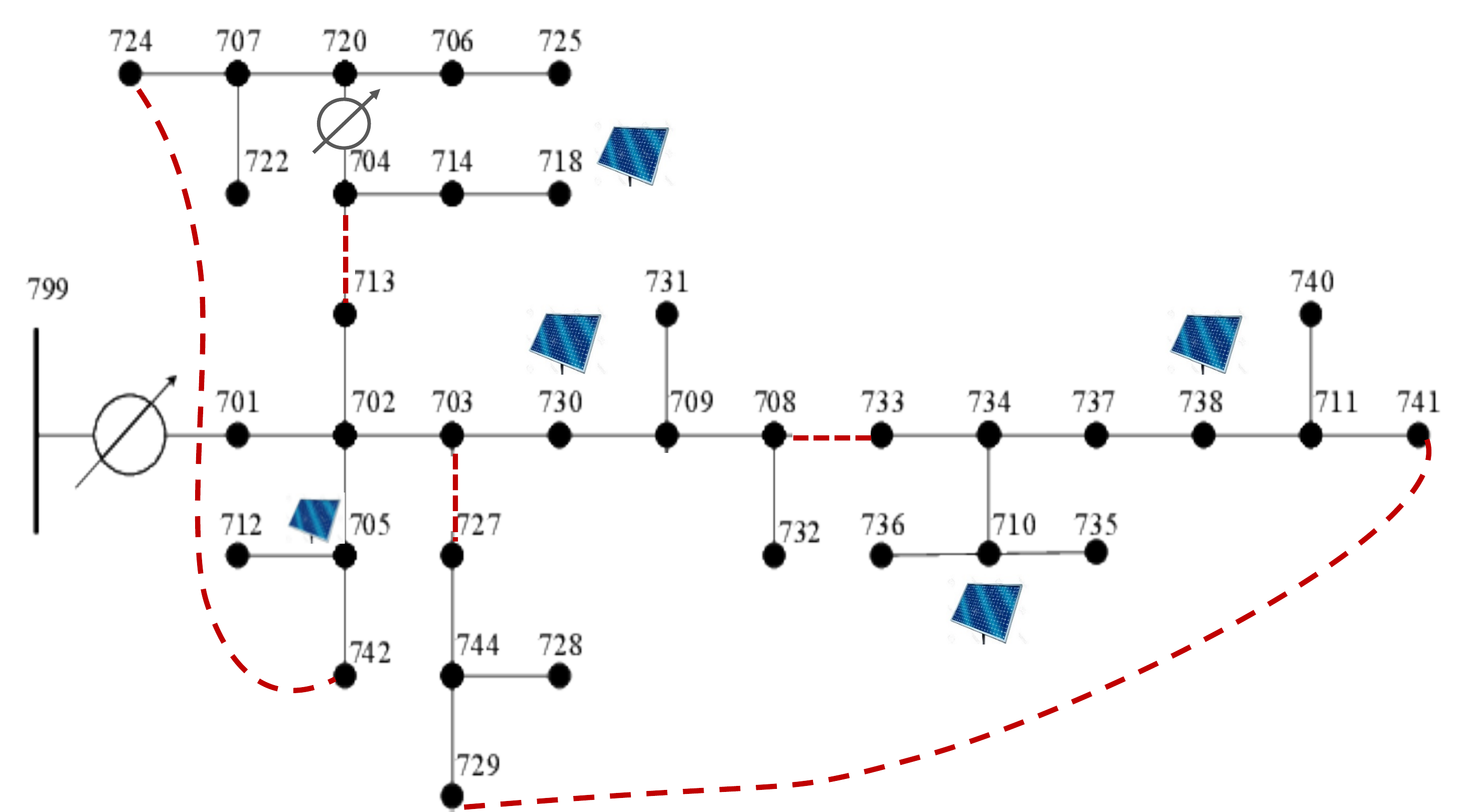}
	\caption{The IEEE 37-bus feeder with an additional regulator, lines, and DERs.}
	\label{fig:37bus}
\end{figure}

\begin{figure}[t]
	\centering
	\includegraphics[scale=0.29]{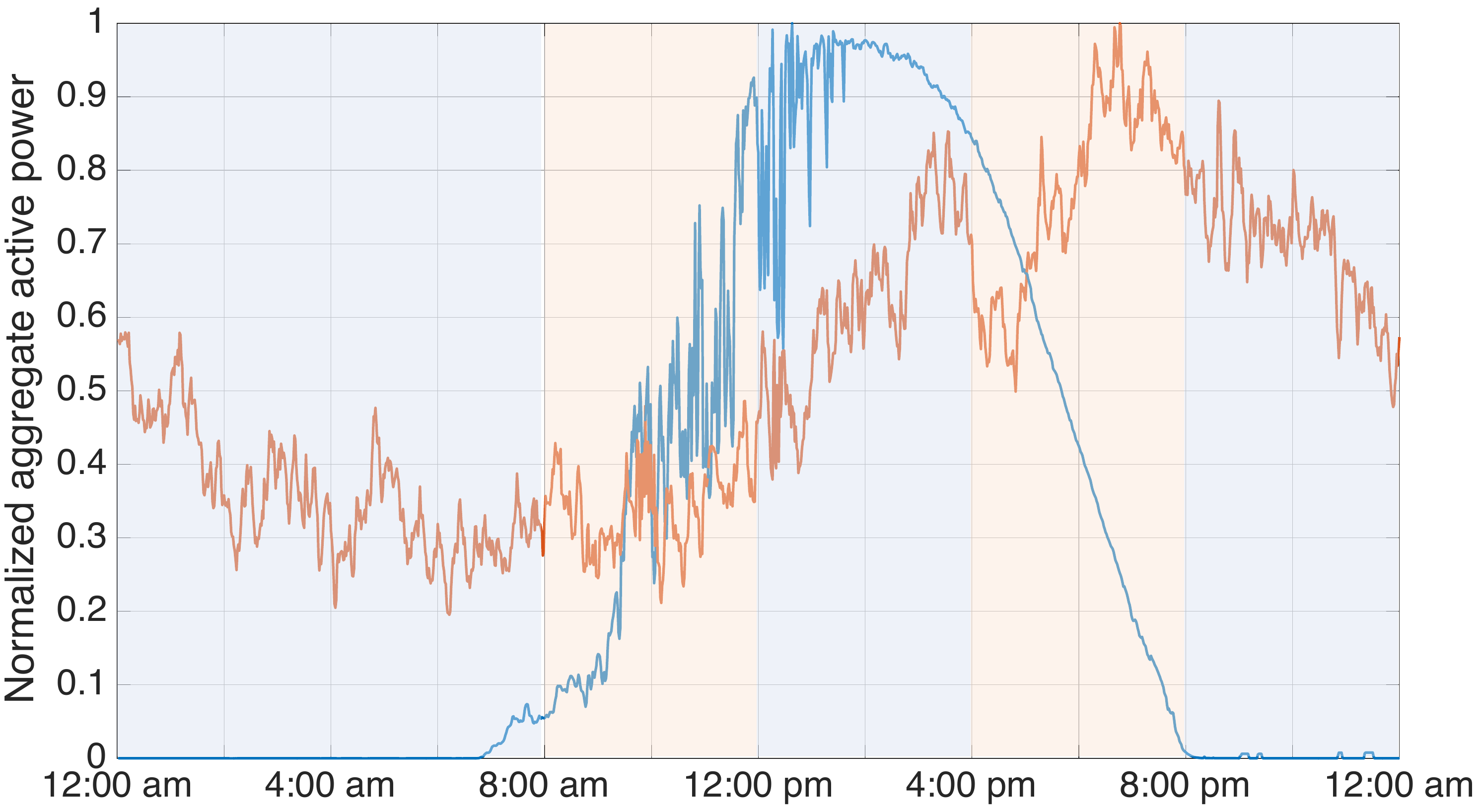}
	\caption{Normalized aggregate active load and solar generation over time. The 5 panes represent the operating periods $\mcT_1-\mcT_5$.}
	\label{fig:loadgen}
\end{figure}

The developed DNR was tested on a modified version of the IEEE 37-bus benchmark feeder converted to its single-phase equivalent~\cite{guido2018analytics}; see Fig.~\ref{fig:37bus}. Switches include three existing and two additional lines, all denoted as dashed edges. Regulator $(799,701)$ is assumed to be remotely controlled. The regulator added on line $(704,720)$ is set locally controlled with reference voltage $1$~pu and bandwidth $0.016$~pu. Five PVs of equal capacity were placed at buses $\{705, 710, 718, 730, 738\}$. Residential load and solar data were extracted from the Pecan Street dataset~\cite{pecandata}: Minute-based load and solar generation data were collected for June 1, 2018. The tested feeder has $25$ buses with non-zero load. The first $75$ non-zero load buses from the dataset were aggregated every $3$ and normalized to obtain $25$ load profiles. Similarly, $5$ solar generation profiles were obtained. The normalized minute-based feeder-aggregated load and solar profiles are shown in Fig.~\ref{fig:loadgen}. 

The normalized load profiles for the $24$-hr period were scaled so the $80$-th percentile of the total load duration curve coincided with the total nominal spot load of the feeder. This scaling results in a peak aggregate load being $1.29$ times the total nominal load. Since the Pecan Street data contained no reactive power, we synthesized reactive loads by scaling the actual demand to match the nominal power factors of the IEEE 37-bus feeder. The linearized ZIP parameters of \eqref{eq:ZIP} were found using the derived (re)active load profiles for each bus and the load type from the benchmark. The motive of the watt-var control is to alleviate overvoltages in grids with high solar integration. Thus, solar data were scaled such that $75\%$ of the overall energy consumption was met from PVs. Problem \eqref{eq:DNR} was solved using YALMIP and Gurobi~\cite{YALMIP},~\cite{gurobi}, on a $2.7$~GHz Intel Core i5 computer with $8$ GB RAM.

\begin{figure}[t]
	\centering
	\includegraphics[scale=0.38]{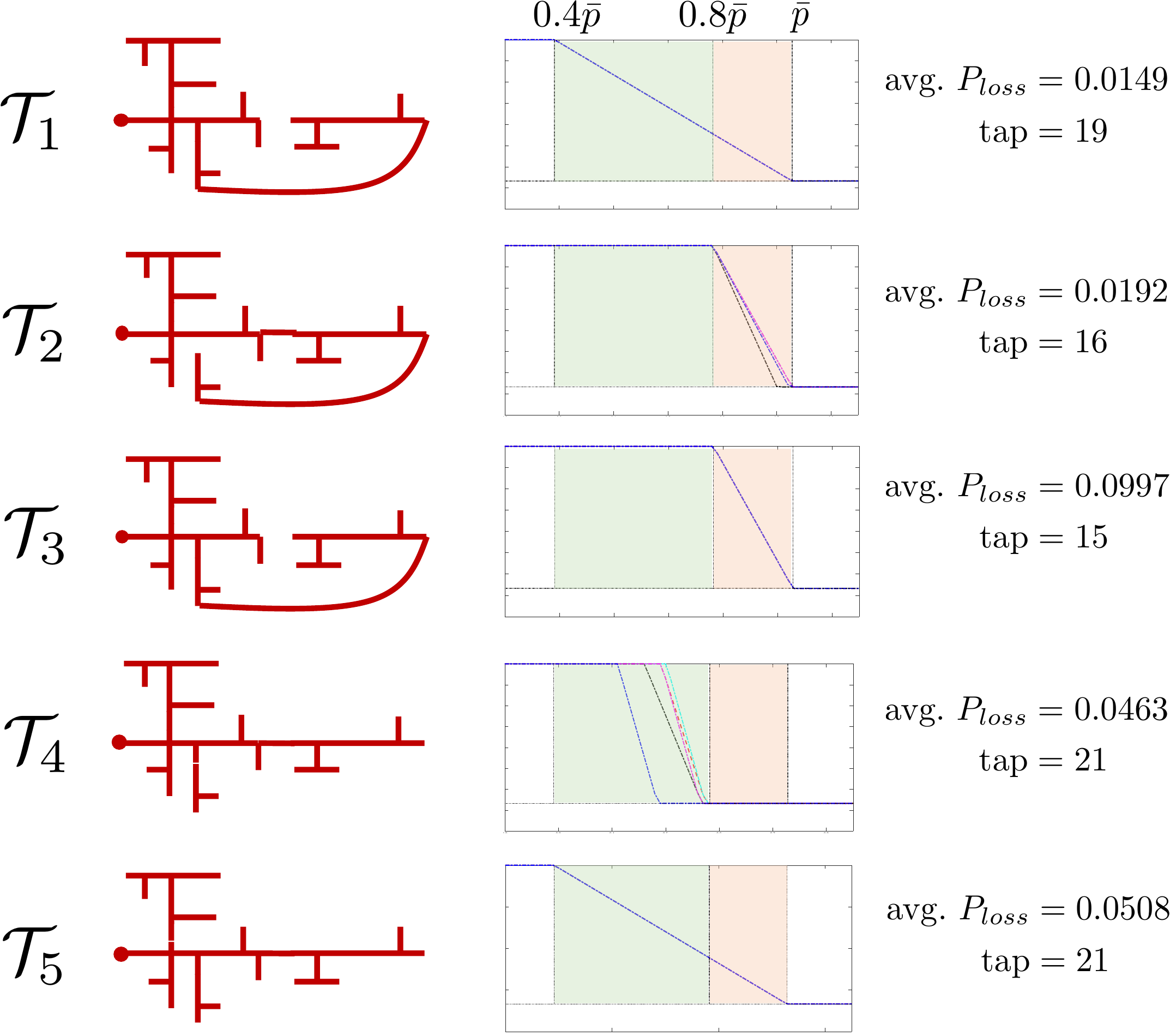}
	\caption{Results of \eqref{eq:DNR} for the feeder of Fig.~\ref{fig:37bus}. \emph{Top to bottom:} Results for the operating periods identified in Fig.~\ref{fig:loadgen}. \emph{Left to right:} optimal topology, watt-var curves for five generators, average power loss, and optimal tap position.}
	\label{fig:result}
\end{figure}

The 24-hr interval was partitioned into five periods $\mcT_1-\mcT_5$; see Fig.~\ref{fig:loadgen}. Period $\mcT_1$ extended over $8$~hr, and the rest for $4$~hr. Each period was divided into $15$-min intervals and $S=2$ samples of load and generation were randomly drawn from the minute-based data. We then solved five instances of \eqref{eq:DNR}. The results are summarized in Fig.~\ref{fig:result}. The solution times for each instance were 452,~92,~800,~268, and 50~sec. The schematics of Fig.~\ref{fig:result} depict how the optimal topology varies with changing load-generation mix throughout the day. Three distinct topologies were found to be optimal: one topology over $\mcT_1$ and $\mcT_3$; one over $\mcT_2$; and a third one over $\mcT_4-\mcT_5$. 

Period $\mcT_1$ experiences low loads and negligible solar generation. As a result, the average power loss incurred is the minimum of all periods, and hence, its watt-var curves are inconsequential. Period $\mcT_2$ features peaking generation and low load. Due to the large PV variation, a single tap setting cannot accomplish voltage regulation, and so PVs participate via reactive power absorption. Voltage regulation and loss minimization via reactive power control are known to be opposing goals~\cite{Turitsyn11}. Thus, PV generators start absorbing reactive power only when overvoltages become unavoidable. This intuition is demonstrated by the watt-var curves of Fig.~\ref{fig:result} for $\mcT_2$, where reactive absorption begins only after PVs inject more than $0.8$ of their capacity. While all PVs tend to absorb minimal reactive power and hence hit the limits of the watt-var curve in \eqref{eq:plim}, the PV at bus $738$ obtains a different curve and absorbs its maximum reactive power before reaching its $\bar{p}$. During $\mcT_3$, voltages remain within limits because both load and generation are high, and so watt-var curves coincide with minimal reactive absorption. Period $\mcT_4$ witnesses a steep decline in generation while the load is high. Therefore, a high tap setting of $21$ is needed to avoid undervoltages after the decline in generation. However, for the tap setting of $21$, reactive absorption is needed to avoid overvoltages during high PV generation. Finally, period $\mcT_5$ with no PV generation yields generic watt-var curves similar to $\mcT_1$, but higher taps and different topology from $\mcT_1$ due to high load. The average active power loss for all periods follows the loading conditions.

We also experimented with the number of operating periods and the number of samples $S$. The effects are on three fronts: \emph{i)} Frequency of changes in taps, topology, and inverter settings; \emph{ii)} Violation of voltage limits over \emph{all} minute-based data after fixing $\bomega_1$; and \emph{iii)} Total active power loss for \emph{all} minute-based data after fixing $\bomega_1$. Shorter periods inherently result in more frequent operations on taps, switches, and inverter settings, besides the communication overhead. Longer operating periods on the other hand, may render problem \eqref{eq:DNR} infeasible due to extreme changes in the load-generation mix. For instance, while an $8$-hr period for $\mcT_1$ yields an acceptable solution for \eqref{eq:DNR}, merging $\mcT_3$ and $\mcT_4$ results in infeasibility. Further, even when \eqref{eq:DNR} is feasible for a longer period, the total losses increase. Given a length, the periods should be chosen based on disparate load-generation levels. Further, for a fixed length, increasing $S$ results in lower overall losses and less voltage violations at the cost of higher computational burden, so $S$ should be determined based on the anticipated fluctuations.

\section{Conclusions}
Leveraging the automation capabilities of forthcoming active distribution grids, this work has put forth an optimal DNR approach. DERs operate under watt-var control curves to save on cyber resources. These curves are optimized jointly with the feeder topology and regulator settings. The approach uniquely integrates legacy devices and ensures radiality through intuitive and efficient optimization models. Numerical tests have corroborated: \emph{a)} The optimal topology varies with the load-generation mix; \emph{b)} Coordinating DERs and regulators is critical during periods of steep transitions; and \emph{c)} The trade-offs involved in the length of operating periods and the number of scenarios. Some open research directions are discussed next. Although this work has considered a single-phase feeder, the models should be extendable to unbalanced multiphase setups. This work has adopted a linearized grid model; the operational benefits \emph{vis-a-vis} the possible computational challenges of an AC grid model have to be explored. Although substituting watt-var with volt-var curves might seem straight-forward, the related optimization and stability issues have to be addressed. It is worth adding that the developed toolbox of radiality, DER, and regulator models is applicable when coping with other grid reconfiguration or restoration tasks. 

\balance
\bibliography{myabrv,power}
\bibliographystyle{IEEEtran}

\end{document}